\newtheorem{thm}{Theorem}
\newenvironment{keyword}{\par{\noindent\bf Keywords:}}
\begin{document}

\title{Complexity of the robust weighted independent set problems on interval graphs}

\author{Adam Kasperski\footnote{Corresponding author}\\
   {\small \textit{Institute of Industrial}}\\
  {\small \textit{Engineering and Management,}}\\
  {\small \textit{Wroc{\l}aw University of Technology,}}\\
  {\small \textit{Wybrze{\.z}e Wyspia{\'n}skiego 27,}}\\
  {\small \textit{50-370 Wroc{\l}aw, Poland,}}\\
  {\small \textit{adam.kasperski@pwr.edu.pl}}
  \and  Pawe{\l} Zieli{\'n}ski \\
    {\small \textit{Institute of Mathematics}}\\
  {\small \textit{and Computer Science}}\\
  {\small \textit{Wroc{\l}aw University of Technology,}}\\
  {\small \textit{Wybrze{\.z}e Wyspia{\'n}skiego 27,}}\\
  {\small \textit{50-370 Wroc{\l}aw, Poland}}\\
{\small \textit{pawel.zielinski@pwr.edu.pl}}
}

\maketitle

\begin{abstract}
This paper deals with
 the max-min and min-max regret versions of the maximum weighted independent set problem on interval
 graphs
 with uncertain vertex weights.
 Both problems have been recently investigated by Nobibon and Leus (2014), who showed that they are NP-hard for two scenarios and strongly NP-hard if the number of scenarios
is a part of the input.
In this paper, new complexity and approximation results on the problems under consideration
are provided,
 which
extend the ones previously obtained. Namely,
 for the discrete scenario uncertainty representation
it is proven that if the number of scenarios $K$ is a part of the input, 
then the max-min version of the problem is not at all approximable. On the other hand,  its min-max regret version is approximable within $K$ and not approximable within $O(\log^{1-\epsilon}K)$ for any $\epsilon>0$ unless  the problems in NP have quasi polynomial algorithms. Furthermore, for the interval uncertainty representation it is shown that the min-max regret version is NP-hard and approximable within 2.
\end{abstract}

\begin{keyword}
	robust optimization, independent set, interval graph, computational complexity
\end{keyword}

\section{Introduction}

We are given  a family $\mathcal{I}=\{I_1,I_2,\dots,I_n\}$ of closed intervals of real line, where $I_i=[a_i, b_i]$, $i\in [n]$
(we use  $[n]$ to denote the set $\{1,\ldots n\}$). An undirected graph $G=(V,E)$ with $|V|=n$ vertices and $|E|=m$ edges is called an \emph{interval graph} for $\mathcal{I}$ if $v_i\in V$ corresponds to $I_i$ and there is an edge $(v_i, v_j)\in E$ if and only if the intervals $I_i$ and $I_j$ have nonempty intersection. An \emph{independent set} $X$ in $G$ is a subset of the vertices of $G$ such for any $v_i,v_j\in X$ it holds $(v_i,v_j)\notin E$. We will use $\Phi$ to denote the set of all independent sets in $G$. For each vertex $v_i\in V$ a nonnegative weight $w_i$ is specified. In 
the \emph{maximum weighted independent set} problem
($\textsc{IS}$ for short), we seek an independent set~$X$ in $G$ of the maximum total weight $F(X)=\sum_{v_i \in X} w_i$. Contrary to the problem in general graphs, \textsc{IS} for interval graphs is polynomially solvable~\cite{PB96}. It has some important practical applications and  we refer the reader to~\cite{NL14, SPP07} for a description of them.

In~\cite{NL14} the following robust versions of the $\textsc{IS}$ problem have been recently investigated. Suppose 
that the vertex weights are uncertain and they are specified as a  scenario set $\Gamma$. Namely, each scenario 
$S\in \Gamma$ is a vector $(w_1^S,\dots,w_n^S)$ of nonnegative integral vertex weights which may occur. Now the weight of a solution~$X$ depends on a scenario and we will denote it by $F(X,S)=\sum_{v_i\in X} w_i^S$. Let $F^*(S)=\max_{X\in \Phi} F(X,S)$ be the weight of a maximum weighted independent set in~$G$ under scenario~$S$. In this paper, we wish to investigate the following two robust problems:
$$\textsc{Max-Min IS}:\;\; opt_1=\max_{X\in \Phi} \min_{S\in \Gamma} F(X,S),$$
$$\textsc{Min-Max Regret IS}:\;\; opt_2=\min_{X\in \Phi} \max_{S\in \Gamma}(F^*(S)- F(X,S)).$$
There are two popular methods of defining scenario set $\Gamma$ (see, e.g.,~\cite{KY97, K08}). For 
the \emph{discrete uncertainty representation} set $\Gamma=\{S_1,\dots,S_K\}$ contains $K$ explicitly given scenarios. For the  \emph{interval uncertainty representation}, for each vertex $v_i$ an interval $[\underline{w}_i, \overline{w}_i]$ of its possible weights is specified and $\Gamma$ is the Cartesian product of all these intervals. 

Both uncertainty representations have been studied  in~\cite{NL14}, where it has been shown that for the discrete uncertainty representation the max-min and min-max regret versions of the  $\textsc{IS}$ problem are NP-hard when $K=2$ and strongly NP-hard when the number of scenarios~$K$ is a part of input. 
Furthermore, some pseudopolynomial algorithms for both problems, when $K$ is constant, have been provided. 
For the interval uncertainty representation, the \textsc{Max-Min IS} problem  can be trivially reduced to a deterministic polynomially solvable counterpart, 
but the complexity of   \textsc{Min-Max Regret IS}  remained open. 
\paragraph{Our results}
We extend the complexity results obtained in~\cite{NL14} 
and provide new approximation ones
for both discrete and interval uncertainty representations. 
 Namely,
 for the discrete scenario uncertainty representation, we establish
that when the number of scenarios~$K$ is a part of the input, the \textsc{Max-Min IS} problem
is not at all approximable, the  \textsc{Min-Max Regret IS}  problem is approximable within~$K$ and not approximable within $O(\log^{1-\epsilon}K)$ for any $\epsilon>0$ unless  NP$\in$DTIME$(n^{\mathrm{ polylog}\, n})$.
We also show that both  \textsc{Max-Min IS} and \textsc{Min-Max Regret IS}
have \emph{fully polynomial-time approximation schemes} (FPTAS's), when $K$ is constant, i.e. for every
constant $\epsilon>0$
they admit $(1+\epsilon)$-approximation algorithms that run 
in
polynomial time both in $1/\epsilon$ and the size of their inputs.
  Furthermore, for the interval uncertainty representation, we prove that \textsc{Min-Max Regret IS} 
   is NP-hard and approximable within~2.

\section{Complexity and approximation results}

In this section, we extend the complexity results 
for the  \textsc{Max-Min~IS} and \textsc{Min-Max Regret~IS} problems
provided in the recent paper~\cite{NL14} 
and give new approximation ones
for both discrete and interval uncertainty representations. 
We start by considering the discrete scenario  uncertainty representation.
\begin{thm}
If $K$ is a part of the input, then \textsc{Max-Min IS} is strongly NP-hard and not at all approximable unless $P=NP$.
\label{tmmisap1}
\end{thm}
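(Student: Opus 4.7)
The plan is to reduce from 3-SAT, which is strongly NP-complete, and produce a gap of the form ``$opt_1 \geq 1$ versus $opt_1 = 0$'' so that any $\rho$-approximation for any finite $\rho$ would decide the satisfiability of the underlying 3-SAT formula in polynomial time.

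Given a 3-CNF formula $\varphi$ with variables $x_1,\dots,x_n$ and clauses $C_1,\dots,C_m$, I construct an interval family as follows. For each variable $x_i$ I create two closed intervals $I_i^T$ and $I_i^F$ placed around a real point $p_i$ so that $I_i^T \cap I_i^F \neq \emptyset$, and I choose $p_1<p_2<\cdots<p_n$ far apart so that no interval of the $i$-th pair meets any interval of the $j$-th pair for $i\neq j$. Thus the associated interval graph $G$ is a disjoint union of $n$ edges, and any independent set $X$ selects at most one element of each pair $\{I_i^T,I_i^F\}$; such an $X$ encodes a \emph{partial} truth assignment in a natural way (picking $I_i^T$ means $x_i = \text{true}$, picking $I_i^F$ means $x_i = \text{false}$, picking neither means $x_i$ is left unassigned). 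I create $K = m$ scenarios, one per clause: under scenario $S_j$ the weight $w_j(I_i^T)$ equals $1$ if the literal $x_i$ occurs in $C_j$ and $0$ otherwise, and symmetrically $w_j(I_i^F)=1$ iff $\neg x_i$ occurs in $C_j$. All weights are in $\{0,1\}$, so the instance has size polynomial in $|\varphi|$ and the numbers are polynomially bounded, which will yield \emph{strong} NP-hardness.

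With this construction the weight $F(X,S_j)$ equals the number of literals of $C_j$ that the partial assignment encoded by $X$ satisfies. Therefore $\min_{j\in[m]} F(X,S_j) \geq 1$ iff the partial assignment encoded by $X$ satisfies every clause, in which case $\varphi$ is satisfiable (extending the partial assignment arbitrarily keeps every clause satisfied). Conversely, a satisfying assignment of $\varphi$ yields an independent set $X$ with $F(X,S_j)\geq 1$ for all $j$. Hence $opt_1 \geq 1$ if $\varphi$ is satisfiable and $opt_1 = 0$ otherwise.

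Now suppose, for contradiction, that \textsc{Max-Min IS} admits a polynomial time $\rho$-approximation algorithm for some finite $\rho \geq 1$. On any input derived from $\varphi$ as above, this algorithm would have to return a solution of value at least $opt_1/\rho$; since the weights are integers, the returned value is at least $1$ iff $opt_1\geq 1$, i.e.\ iff $\varphi$ is satisfiable. This would decide 3-SAT in polynomial time, contradicting $P\neq NP$. The only mildly delicate point of the argument is making sure the gadgets are genuinely realizable as an interval graph and that the ``partial assignment'' issue (a vertex of a pair being absent from $X$) does not break the equivalence; both are handled by the construction above, since leaving a pair empty never helps but also never hurts a feasible $X$, while a satisfying assignment always yields a valid independent set of positive min-weight.
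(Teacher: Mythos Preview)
Your proof is correct. Both your argument and the paper's establish the same gap ($opt_1 \geq 1$ versus $opt_1 = 0$, with $0/1$ weights) and then observe that any finite-ratio approximation would distinguish the two cases; they differ only in the source problem and the gadget.

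The paper reduces from \textsc{Vertex Cover}: given a graph $G=(V,E)$ with $|V|=n$ and a bound $L$, it builds an interval graph consisting of $L$ pairwise disjoint cliques, each of size $n$, and introduces one scenario per edge $(v_k,v_l)\in E$, under which exactly the copies of $v_k$ and $v_l$ in every clique receive weight $1$. A maximal independent set then picks one vertex from each clique, i.e.\ a multiset of $L$ vertices of $G$, and has positive min-value iff every edge is covered. Your reduction from 3-\textsc{SAT} produces a structurally simpler interval graph (a disjoint union of $n$ edges rather than a union of large cliques) and a very direct semantics (independent sets encode partial truth assignments, scenarios encode clauses). Both approaches give strong NP-hardness via $0/1$ weights; yours has the advantage of smaller gadgets and a one-line correctness argument, while the paper's construction highlights a ``select $L$ items to cover all constraints'' structure that may be more suggestive for related covering-type robust problems.
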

\begin{proof}
We provide a polynomial time reduction from the following 
the \textsc{Vertex Cover} problem, which is known to be strongly NP-complete~\cite{GJ79}. We are given 
an undirected graph~$G=(V,E)$, $|V|=n$,  and an integer~$L$.
 A subset of the vertices $W\subseteq V$ is a \emph{vertex cover} of~$G$ if for each $(v,w)\in E$ either $v\in W$ or $w\in W$ (or both). We ask if there is a vertex cover $W$ of $G$ such that $|W|\leq L$.
  We now construct an instance of \textsc{Max-Min IS} as follows. We first create 
  a family of intervals $\mathcal{I}=\{I_{ij}\}$, $i\in [n]$, $j\in [L]$, where $I_{ij}=[2j,2j+1]$ for each $i\in [n]$ and $j\in [L]$.
   It is easy to check that the resulting  interval graph $G'$ corresponding to~$\mathcal{I}$ is composed of~$L$ separate cliques of size~$n$ and each maximal independent set in~$G'$ contains exactly~$L$ vertices, one from every clique 
   (see Figure~\ref{fig1} - note that the  intervals from~$\mathcal{I}$ refer to the vertices in~$G'$ ). 
   \begin{figure}
\begin{center}
\includegraphics{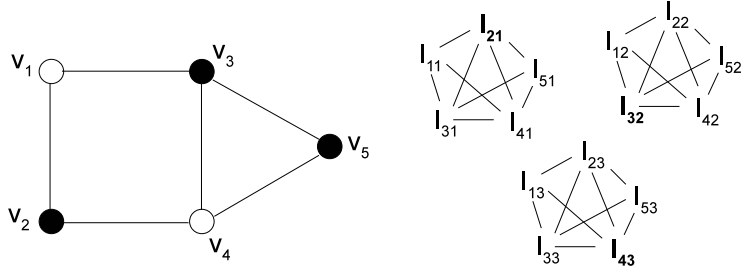}
\caption{An instance of \textsc{Vertex Cover} for $L=3$ and the corresponding interval graph~$G'$.}
\label{fig1}
\end{center}
\end{figure}
 We now form scenario set~$\Gamma$ as follows. For each edge $(v_k, v_l) \in E$,
    we create scenario under which the weights of 
    intervals (resp. vertices)
    $I_{kj}$ and $I_{lj}$ are equal to~1 for each $j\in [L]$ and the weights of the remaining intervals 
    (resp.  vertices) equal~0 (see Table~\ref{tab1}).
\begin{table}[ht]
\caption{Scenario set~$\Gamma$ for the instance from Figure~\ref{fig1}. Independent set $X=\{I_{21}, I_{32}, I_{53}\}$ corresponds to the vertex cover $W=\{v_2, v_3, v_5\}$ of size 3.} \label{tab1}
\begin{scriptsize}
\begin{center}
\begin{tabular}{l|cccccc}
& (1,2) & (1,3) & (2,4) & (3,4) & (3,5) & (4,5) \\ \hline \hline
$I_{11}$ & 1& 1& 0 & 0 & 0& 0\\
$\pmb{I_{21}}$ & \textbf{1}& \textbf{0}& \textbf{1} & \textbf{0}& \textbf{0}& \textbf{0}\\
$I_{31}$ & 0& 1& 0 & 1& 1& 0\\ 
$I_{41}$ & 0& 0& 1& 1& 0& 1\\
$I_{51}$ & 0& 0 & 0& 0& 1& 1\\ \hline
$I_{12}$ & 1& 1& 0 & 0& 0& 0\\ 
$I_{22}$ & 1& 0& 1 & 0& 0& 0\\
$\pmb{I_{32}}$ & \textbf{0}& \textbf{1}& \textbf{0} & \textbf{1}& \textbf{1}& \textbf{0}\\
$I_{42}$ & 0& 0& 1 & 1& 0& 1\\ 
$I_{52}$ & 0 & 0& 0 & 0& 1& 1\\ \hline
$I_{13}$ & 1 & 1& 0 & 0& 0& 0\\
$I_{23}$ & 1 & 0 & 1& 0& 0& 0\\ 
$I_{33}$ & 0& 1 & 0 & 1& 1& 0 \\
$I_{43}$ & 0& 0 & 1  & 1& 0& 1 \\
$\pmb{I_{53}}$ & \textbf{0}& \textbf{0}& \textbf{0}   & \textbf{0}& \textbf{1} & \textbf{1}\\
\end{tabular}
\end{center}
\end{scriptsize}
\end{table}

Suppose that there is a vertex cover~$W$ of~$G$ such that $|W|\leq L$.
 We lose nothing by assuming that $|W|=L$.
 One can easily meet this
assumption by adding arbitrary additional vertices to~$W$ - if necessary. 
 Hence $W=\{v_{i_1},\dots,v_{i_L}\}$. Let us choose an independent set $X$
 consists of the vertices in~$G'$ that correspond to intervals:
 $I_{i_11},\dots,I_{i_LL}$. 
 From the construction of the scenario set, it follows that $F(X,S)\geq 1$ for all $S\in \Gamma$ and, consequently, $opt_1\geq 1$. Assume now that $opt_1\geq 1$. So, there is an independent set~$X$ in~$G'$ such that $F(X,S)\geq 1$ under each scenario $S\in \Gamma$. 
 The independent set $X$ consists of the vertices corresponding to the intervals:
 $I_{i_11},\dots,I_{i_LL}$. 
 Consider the set of vertices $W=\{v_{i_1},\dots,v_{i_L}\}$. From the construction of the scenario set $\Gamma$ 
 it may be concluded that each edge of~$G$ is covered by~$W$. Therefore, $W$ is a vertex cover of size at most~$L$.
 Hence, the answer to \textsc{Vertex Cover} is yes if and only if $opt_1\geq 1$, which together with the fact that $opt_1\geq 0$,  imply 
  the non-approximability of \textsc{Max-Min IS}.
\end{proof}

In order to establish the hardness result for the \textsc{Min-Max Regret IS}, we will use the following variant of the \textsc{Label Cover} problem
(see e.g., \cite{AC95, MNO13}):
\begin{description}
\item[\mdseries \scshape Label Cover:]

We are given a regular  bipartite graph $G=(V\cup W,E)$, $E\subseteq V\times W$;
a set of labels $[N]$ and
for each edge $(v,w)\in E$ a  map (partial) $\sigma_{v,w}:[N]\rightarrow [N]$.
A \emph{labeling} of~$G$ is an assignment of a subset of labels to each of the vertices of $G$, i.e. a function  $l: V\cup W \rightarrow 2^{[N]}$. We say that a labeling \emph{satisfies}
an edge $(v,w)\in E$ if there exist $a\in l(v)$ and $b\in l(w)$ such that $\sigma_{v,w}(a)=b.$
A \emph{total labeling} is a labeling that satisfies all edges.
We seek a labeling whose  value defined by  $\max_{x\in V\cup W}|l(x)|$ is minimal.
This minimal value is denoted by 
$val(\mathcal{L})$, where $\mathcal{L}$ is the input instance.
\end{description}
\begin{thm}[\cite{MNO13}]
There exists a constant~$\gamma>0$ such that
for any language $L\in NP$, any input $\mathbf{w}$ and any $N>0$,
one can construct a \textsc{Label Cover} instance~$\mathcal{L}$  with the following properties in time polynomial in the instance's size:
\begin{itemize}
\item the number of vertices in $\mathcal{L}$ is $|\mathbf{w}|^{O(\log N)}$,
\item if $\mathbf{w}\in L$, then $val(\mathcal{L})=1$,
\item if $\mathbf{w}\not\in L$, then $val(\mathcal{L})> N^{\gamma}$.
\end{itemize}
\label{tlancover}
\end{thm}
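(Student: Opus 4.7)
The plan is to reduce from the basic PCP theorem and then amplify the gap via parallel repetition, adjusting for the covering formulation used here (where $val(\mathcal{L})$ is the maximum subset size assigned to any vertex, rather than the fraction of satisfied edges in the standard functional form). First, I would start from a functional (two-prover one-round) \textsc{Label Cover} instance with a constant soundness gap, which is obtainable from the PCP theorem applied to a canonical NP-complete problem such as \textsc{Max-3Sat}. In this standard form, a yes-instance admits a single-label assignment satisfying every edge, while a no-instance forces any single-label assignment to leave a constant fraction of edges unsatisfied.

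Second, I would apply Raz's parallel repetition theorem $k$ times to amplify the soundness. After $k$-fold repetition, the label alphabet grows to size roughly $N = N_0^k$ for the base alphabet $N_0$, the instance size becomes $|\mathbf{w}|^{O(k)}$, and the fraction of unsatisfiable edges in the no-instance decays as $c^k$ for some constant $c<1$. Choosing $k = \Theta(\log N)$ gives both the desired alphabet size and an instance size of $|\mathbf{w}|^{O(\log N)}$, matching the first bullet of the statement. In the yes-instance, parallel repetition trivially preserves satisfiability, so a single-label total labeling still exists and $val(\mathcal{L})=1$.

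Third, I would translate the standard soundness guarantee (a constant fraction of edges unsatisfied by any single-label assignment) into the covering bound $val(\mathcal{L}) > N^{\gamma}$. The idea is a simple averaging/counting argument: if some total labeling achieves maximum subset size $t$, then by choosing, for each vertex, a uniformly random label from its assigned subset, one obtains a single-label assignment whose expected fraction of satisfied edges is at least $1/t^2$ (one factor of $t$ per endpoint of an edge). In the no-instance this fraction is at most $c^k$, so $t^2 \geq c^{-k}$, i.e.\ $t \geq c^{-k/2}$. Setting $k = \Theta(\log N)$ with the right constants yields $t > N^{\gamma}$ for a suitable $\gamma>0$ depending only on $c$ and the base alphabet.

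The main obstacle I expect is getting the exponent $\gamma$ to be an absolute constant independent of $N$ and of the starting NP language. Parallel repetition in its original form gives a decay like $c^{\Omega(k/\log N_0)}$ once alphabet size is folded in, which is why one needs a sharper version (e.g.\ Raz's subsequent improvements, or the Moshkovitz--Raz projection-games construction) to get a clean $N^{\gamma}$ gap with $\gamma$ depending only on the base gap. Once that quantitative form is in hand, the three bullets follow by direct bookkeeping: pick the constants in $k = \Theta(\log N)$ so that the amplified soundness translates, via the square-root loss in the randomized rounding above, into an exponent $\gamma>0$ uniform in~$N$ and $\mathbf{w}$.
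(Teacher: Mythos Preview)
The paper does not prove this theorem at all: it is quoted verbatim from \cite{MNO13} and used as a black box in the subsequent gap reduction. So there is no ``paper's own proof'' to compare against; your proposal is doing strictly more than the paper does.

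As a sketch of the underlying argument, your outline is essentially the standard one and is correct. One remark on the obstacle you flag: the worry about $\gamma$ depending on $N$ through the $\log|\Sigma|$ loss in Raz's bound is not actually an issue here, because the alphabet size that enters Raz's exponent is that of the \emph{base} game (a fixed constant coming from the PCP reduction), not the size $N_0^k$ after repetition. With the base alphabet and base gap both absolute constants, $k$-fold repetition gives soundness $2^{-\Omega(k)}$, and choosing $k=\Theta(\log N)$ yields soundness $N^{-\Omega(1)}$; your $1/t^2$ rounding then gives $t\ge N^{\gamma}$ for an absolute $\gamma>0$. So the sharper Moshkovitz--Raz machinery is not needed for the statement as written, though it would be if one also wanted near-linear instance size.
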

The following theorem  will be needed in proving a lower bound on the approximation of \textsc{Min-Max Regret IS}.
\begin{thm}
There exists a constant~$\gamma>0$ such that
for any language $L\in NP$, any input $\mathbf{w}$, and any $N>0$,
one can construct an instance of
\textsc{Min-Max Regret IS} with the following properties:
\begin{itemize}
\item if $\mathbf{w}\in L$, then $opt_2\leq 1$,
\item if $\mathbf{w}\not\in L$, then $opt_2\geq \lfloor N^\gamma \rfloor:=g$,
\item the number of intervals is at most $|\mathbf{w}|^{O(\log N)}N$ and the number of scenarios is at most $|\mathbf{w}|^{O(g\log N)}N^{g}$.
\end{itemize}
\label{taprminmax}
\end{thm}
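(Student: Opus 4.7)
The plan is to reduce the gap \textsc{Label Cover} guaranteed by Theorem~\ref{tlancover} to \textsc{Min-Max Regret IS}. Given an input $(\mathbf{w},N)$, the first step is to invoke Theorem~\ref{tlancover} to build a \textsc{Label Cover} instance $\mathcal{L}=(V\cup W,E,[N],\{\sigma_{v,w}\})$ with $n_\mathcal{L}=|\mathbf{w}|^{O(\log N)}$ vertices, satisfying the promise $val(\mathcal{L})=1$ if $\mathbf{w}\in L$ and $val(\mathcal{L})>g$ otherwise.

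The interval graph $G'$ would then be built by placing, for each pair $(x,a)\in(V\cup W)\times[N]$, a single interval $I_{x,a}$ on the real line, arranged so that for every fixed vertex $x$ the intervals $\{I_{x,a}\}_{a\in[N]}$ form a clique while the cliques associated with distinct vertices of $\mathcal{L}$ occupy disjoint segments (the same clique-per-vertex trick as in the proof of Theorem~\ref{tmmisap1}). Any independent set $X$ in $G'$ then picks at most one label per vertex of $\mathcal{L}$ and is canonically identified with a partial singleton labeling $l_X\colon V\cup W\to[N]\cup\{\bot\}$. The total number of intervals is $n_\mathcal{L}\cdot N=|\mathbf{w}|^{O(\log N)}N$, as required.

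For the scenario set $\Gamma$, I would introduce one scenario $S_\tau$ per $g$-tuple $\tau=((e_1,a_1),\ldots,(e_g,a_g))\in(E\times[N])^g$; writing $e_i=(v_i,w_i)$ and $b_i=\sigma_{e_i}(a_i)$, each $S_\tau$ concentrates weights on the $2g$ prescribed witness intervals $\{I_{v_i,a_i},I_{w_i,b_i}\}_{i=1}^g$, with the weight pattern calibrated so that (i) matching every witness attains the optimum $F^*(S_\tau)$, and (ii) failure to match any witness costs a single unit of regret, while a coordinated failure across all $g$ coordinates costs $g$ units. Since $|E|\leq n_\mathcal{L}^{2}$, the number of scenarios is bounded by $|E|^g N^g\leq|\mathbf{w}|^{O(g\log N)}N^g$. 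The two directions of the gap are then verified as follows. In the YES case, the satisfying singleton labeling $l_0$ induces the IS $X_0=\{I_{x,l_0(x)}:x\in V\cup W\}$, and because $l_0$ satisfies every edge of $\mathcal{L}$, direct weight accounting yields $F^*(S_\tau)-F(X_0,S_\tau)\leq 1$ for every $\tau$, whence $opt_2\leq 1$. In the NO case, any IS $X$ gives a singleton labeling $l_X$ that, by $val(\mathcal{L})>g>1$, fails at least one edge $e=(v,w)$; choosing $\tau$ to repeat the failing edge $e$ in every coordinate with witnesses disjoint from $l_X(v)$ and $l_X(w)$ forces $F^*(S_\tau)-F(X,S_\tau)\geq g$, giving $opt_2\geq g$.

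The main obstacle is the simultaneous calibration of the scenario weights so that both halves of the gap hold: the YES bound $opt_2\leq 1$ must survive \emph{every} one of the exponentially many $g$-tuples (including those that are adversarially mismatched with $l_0$), while the NO bound $opt_2\geq g$ must be produced by a \emph{single} $g$-tuple exploiting a single failing edge of $\mathcal{L}$. Achieving this likely requires either replicating the failing edge across the $g$ coordinates of each scenario or introducing auxiliary weights on the same intervals, and in either case a careful accounting of the interplay between the clique structure of $G'$, the $g$-fold scenario indexing, and the maps $\sigma_{v,w}$ is needed to faithfully translate the $g$-gap of Theorem~\ref{tlancover} into the promised regret gap.
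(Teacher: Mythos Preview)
Your plan diverges from the paper in a way that creates a real gap. You index intervals by \emph{vertices} of the Label Cover instance (one clique per $x\in V\cup W$, one interval per label), so that an independent set in $G'$ is a singleton labeling. The paper instead indexes intervals by \emph{edges}: for each $(v,w)\in E$ it builds a clique $\mathcal{I}_{v,w}=\{I_{v,w}^{i,j}:\sigma_{v,w}(i)=j\}$, so a maximal independent set picks one valid label pair per edge and hence encodes a \emph{total} (generally multi-valued) labeling of $V\cup W$.

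The difficulty with your vertex-based structure shows up precisely at the obstacle you flag. In the NO case, $val(\mathcal{L})>g$ only tells you that every singleton labeling $l_X$ fails at least one edge $e=(v,w)$; it does not hand you $g$ independent failures. Your proposed fix --- repeating $e$ in all $g$ coordinates of a scenario $\tau$ with witnesses disjoint from $l_X(v),l_X(w)$ --- cannot produce regret $g$: all the $V$-side witnesses $I_{v,a_1},\dots,I_{v,a_g}$ lie in the single clique at $v$ (and similarly at $w$), so any independent set, including an optimal one, picks at most one of them; $F^*(S_\tau)-F(X,S_\tau)$ is then bounded by a constant independent of $g$ regardless of how the weights are calibrated. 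Conversely, in the YES direction your satisfying labeling $l_0$ need not match the adversary's chosen witnesses $a_i$ at all, so there is no evident reason why the regret of $X_0$ should be bounded by~$1$ across all $g$-tuples.

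The paper's edge-based encoding handles both directions simultaneously with plain $0/1$ weights and no delicate calibration. A scenario fixes a vertex $v$, chooses $g$ distinct incident edges, and sets weight~$0$ on one interval in each of those $g$ cliques, the $g$ intervals chosen so that their $v$-side labels $i_1,\dots,i_g$ are pairwise distinct; all other weights are~$1$, so $F^*(S)=|E|$ always. In the YES case the independent set $\{I_{v,w}^{l(v),l(w)}\}$ uses the \emph{same} $v$-label $l(v)$ on every edge incident to $v$, so at most one of the $g$ pairwise distinct $i_k$ can equal $l(v)$ and the regret is at most~$1$. In the NO case every independent set induces a multi-labeling that must assign at least $g$ distinct labels to some vertex, and those $g$ chosen intervals are exactly the zero set of some scenario, forcing regret~$\ge g$. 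The gap parameter $g$ thus measures the number of distinct labels at a vertex --- information that your singleton-labeling encoding throws away.
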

\begin{proof}
Suppose $L\in NP$
and let  $\mathcal{L}=(G=(V\cup W,E), N,\sigma)$ be the constructed instance of \textsc{Label Cover} for $L$
 (see Theorem~\ref{tlancover}). 
 We now build  a corresponding instance of \textsc{Min-Max Regret IS} in the following way. 
 We first  number the  edges of $G$ from~1 to $|E|$ 
 in arbitrary way. Then, for each edge $(v,w)\in E$,  we create a family of at most $N$ intervals 
$\mathcal{I}_{v,w}=\{I_{v,w}^{i,j}\,:\,\sigma_{v,w}(i)=j, i\in[N]\}$.
 If $(v,w)$ has a number $r\in \{1,\ldots, |E|\}$, then all the intervals in $\mathcal{I}_{v,w}$ are equal to $[2r, 2r+1]$. 
 We set $\mathcal{I}=\cup_{(v,w)\in E} \mathcal{I}_{v,w}$. It is easily seen that the corresponding
  interval graph~$G'$ for $\mathcal{I}$ is composed of exactly~$|E|$ separate cliques and each maximal independent set in this graph contains exactly $|E|$ intervals, one from each clique. 
  Note that the intervals from  $\mathcal{I}$ refer to the vertices in~$G'$.
Fix vertex $v\in V$. For each $g$-tuple of pairwise distinct edges $(v,w_1),\dots,(v,w_g)$ incident to $v$ and for each $g$-tuple of intervals $(I_{v,w_1}^{i_1,j_1},\dots,I_{v,w_g}^{i_g,j_g})\in \mathcal{I}_{v,w_1}\times\dots\times \mathcal{I}_{v,w_g}$,  where the labels $i_1,\dots,i_g$ are pairwise distinct,
 we form scenario under which all these intervals 
 (resp. the vertices in~$G'$)
 have the weight equal to~0 and all the remaining intervals 
  (resp. the vertices in~$G'$)
 have the weight equal to~1. 
We  proceed in this way  for each vertex $v\in V$.
 Choose vertex $w\in W$. For each $g$-tuple of pairwise distinct edges $(v_1,w),\dots,(v_g,w)$ incident to~$w$
  and for each $g$-tuple of intervals $(I_{v_1,w}^{i_1,j_1},\dots,I_{v_g,w}^{i_g,j_g})\in \mathcal{I}_{v_1,w}\times\dots\times \mathcal{I}_{v_g,w}$,  where the labels $j_1,\dots,j_g$ are pairwise distinct, we form scenario under which all these intervals
  (resp. the vertices)
   have the weight equal to~0 and all the remaining intervals 
   (resp. the vertices)
   have the weight equal to~1. We repeat this construction for each vertex $w\in W$.
  Finally, we add one scenario under which each vertex in~$G'$ has the weight equal to~1.
  We ensure in this way that the scenario set formed is not empty.
An easy computation shows that  in the above instance of \textsc{Min-Max Regret IS}
the cardinality of set~$\mathcal{I}$  is at most  $|E| N$ and the cardinality of the scenario set~$\Gamma$
 is at most  $|V||W|^gN^g+|W||V|^gN^g+1$.
 Hence and from the fact that
  the number of vertices (and also edges) in $G$ is $|\mathbf{w}|^{O(\log N)}$ (see Theorem~\ref{tlancover}),
  we have that $|\mathcal{I}|$ is at most $|\mathbf{w}|^{O(\log N)}N$ and 
  $|\Gamma|$ is at most $|\mathbf{w}|^{O(g\log N)}N^{g}$.

Assume now that $\mathbf{w}\in L$. 
Hence,
 there exists a total labeling~$l$, which assigns exactly one label~$l(v)$ to each $v\in V$ and exactly one label~$l(w)$
  to each $w\in W$. 
  Let us choose the interval~$I_{v,w}^{l(v),l(w)}\in \mathcal{I}_{v,w}$
   for each  $(v,w)\in E$. The vertices that refer to these intervals form 
  an independent set~$X$ in $G'$.
    There is at most one interval (vertex) with~0 weight 
    under each scenario,  and so
    $F(X,S)\geq |E|-1$ under each~$S \in \Gamma$.  
    Since $F^*(S)=|E|$ for each $S\in \Gamma$,  $opt_2\leq 1$.
     Suppose that $\mathbf{w}\notin L$, which gives $val(\mathcal{L})>N^{\gamma}$ and, in consequence,  $val(\mathcal{L})> \lfloor N^{\gamma}\rfloor=g$. 
     Assume, on the contrary, that $opt_2<g$.
     Thus,  there is an independent set~$X$ in $G'$ such that $F(X,S)>|E|-g$ under each scenario $S\in \Gamma$.   
     Note that~$X$ corresponds to a total labeling~$l$ which assigns labels~$i$ to~$v$ and $j$ to~$w$ 
     when the interval $I_{u,v}^{ij}$ is selected from~$\mathcal{I}_{v,w}^{i,j}$. 
     From the construction of~$\Gamma$,
     we conclude that $l$ assigns less than~$g$ distinct labels to each vertex~$x\in V\cup W$, 
     since otherwise $F(X,S)=|E|-g$ for some scenario $S\in\Gamma$. 
     Hence, we get $val(\mathcal{L})< g$, a contradiction.
\end{proof}
\begin{thm} 
If $K$ is a part of the input, then 
\textsc{Min-Max Regret IS} is not approximable within $O(\log^{1-\epsilon} K)$, for any $\epsilon>0$, unless NP$\in$DTIME$(n^{{\rm polylog}\, n})$
\end{thm}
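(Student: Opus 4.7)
The plan is to invoke the gap construction of Theorem~\ref{taprminmax} with a polylogarithmic choice of the label-cover parameter $N$. By tuning $N$ carefully, I can make the gap between yes- and no-instances of \textsc{Min-Max Regret IS} strictly exceed $c\log^{1-\epsilon}K$ for any constant $c$, while keeping the reduction quasi-polynomial in $|\mathbf{w}|$.

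Concretely, given $\epsilon>0$, I would fix a constant $k>(1-\epsilon)/(\gamma\epsilon)$, where $\gamma$ is the constant from Theorem~\ref{tlancover}, and set $N:=\lceil\log^{k}|\mathbf{w}|\rceil$. Applying Theorem~\ref{taprminmax} yields a \textsc{Min-Max Regret IS} instance whose gap is $g=\lfloor N^{\gamma}\rfloor=\Theta(\log^{k\gamma}|\mathbf{w}|)$, whose number of intervals is at most $|\mathbf{w}|^{O(\log\log|\mathbf{w}|)}\cdot N$, and whose number of scenarios $K$ satisfies $\log K=O(\log^{k\gamma+1}|\mathbf{w}|\cdot\log\log|\mathbf{w}|)$. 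Both quantities are quasi-polynomial in $|\mathbf{w}|$, so the construction itself runs in time $|\mathbf{w}|^{\mathrm{polylog}\,|\mathbf{w}|}$. Next I would verify the asymptotic estimate
\[
\frac{g}{\log^{1-\epsilon}K}\;=\;\Omega\!\left(\frac{\log^{k\gamma\epsilon-(1-\epsilon)}|\mathbf{w}|}{(\log\log|\mathbf{w}|)^{1-\epsilon}}\right),
\]
which diverges to infinity because $k\gamma\epsilon-(1-\epsilon)>0$ by the choice of $k$. Consequently, for all sufficiently large $|\mathbf{w}|$ one has $g>c\log^{1-\epsilon}K$ for every fixed constant $c$.

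Suppose now, for contradiction, that for some constant $c$ a polynomial-time algorithm $\mathcal{A}$ computes a $c\log^{1-\epsilon}K$-approximation of \textsc{Min-Max Regret IS}. Given any language $L\in\mathrm{NP}$ and input $\mathbf{w}$, one runs $\mathcal{A}$ on the instance above; this takes time polynomial in its size, hence quasi-polynomial in $|\mathbf{w}|$. By Theorem~\ref{taprminmax}, the value returned is at most $c\log^{1-\epsilon}K$ when $\mathbf{w}\in L$ (because then $opt_2\leq 1$) and at least $g$ when $\mathbf{w}\notin L$; for large $|\mathbf{w}|$ these two ranges are disjoint, so $L$ is decided in quasi-polynomial time, giving $\mathrm{NP}\subseteq\mathrm{DTIME}(n^{\mathrm{polylog}\,n})$, the desired contradiction.

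The main obstacle is the delicate calibration of $N$: taking $N$ polynomial in $|\mathbf{w}|$ would blow $N^{g}$ past quasi-polynomial and wreck the reduction, whereas taking $N$ constant would force $g$ to be constant and the gap would be swamped by $\log^{1-\epsilon}K$. The polylogarithmic window $N=\log^{k}|\mathbf{w}|$, with $k$ tuned relative to $\epsilon$ and $\gamma$, is exactly where both halves of the argument simultaneously succeed.
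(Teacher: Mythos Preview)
Your proposal is correct and follows essentially the same route as the paper: both invoke Theorem~\ref{taprminmax} with a polylogarithmic choice of $N$ (the paper sets $N=\lceil\log^{\beta/\gamma}|\mathbf{w}|\rceil$, which is your $N=\lceil\log^{k}|\mathbf{w}|\rceil$ with $k=\beta/\gamma$), bound $\log K$ by a polylog of $|\mathbf{w}|$, and compare exponents to see that $g$ outgrows $\log^{1-\epsilon}K$. Your bookkeeping is slightly tighter (you retain the $\log\log$ factors the paper absorbs into an extra $\log|\mathbf{w}|$), but the argument is the same.
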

\begin{proof}
Let $\gamma$ be the constant from Theorem~\ref{taprminmax}. Consider a language $L\in$~NP and an
 input~$\mathbf{w}$. Fix any constant $\beta>0$ and set  $N=\lceil \log^{\beta/\gamma} |\mathbf{w}|\rceil$.
 Theorem~\ref{taprminmax} allows us to  construct an instance of \textsc{Min-Max Regret IS} with the number of scenarios  $K$  asymptotically bounded by $|\mathbf{w}|^{\alpha N^\gamma \log N}N^{N^\gamma}$ for some constant $\alpha>0$, $opt_2\leq 1$ if $\mathbf{w}\in L$ and $opt_2\geq\lfloor \log^\beta |\mathbf{w}| \rfloor$ if $\mathbf{w}\notin L$.
 We get $\log K \leq \alpha N^\gamma \log N \log|\mathbf{w}|+N^\gamma \log N\leq \alpha' \log^{\beta+2}|\mathbf{w}|$ for some constant $\alpha'>0$ and sufficiently large $|\mathbf{w}|$. Therefore, $\log|\mathbf{w}|\geq (1/\alpha') \log^{1/(\beta+2)}K$ and the gap is at least $\lfloor \log^\beta |\mathbf{w}| \rfloor\geq \lfloor 1/\alpha' \log^{\beta/(\beta+2)} K \rfloor$.
The constant~$\beta>0$ can be arbitrarily large, and so the gap is $O(\log^{1-\epsilon} K)$ for any $\epsilon=2/(\beta+2)>0$.
Note that, the instance of \textsc{Min-Max Regret~IS} can be built in $O(|\mathbf{w}|^{{\rm poly log }|\mathbf{w}|})$ time, which completes the proof.
\end{proof}

We now show  that \textsc{Min-Max Regret~IS}  admits an approximation algorithm
with some guaranteed worst case ratio - contrary to \textsc{Max-Min~IS}, which is not at all approximable,
when~$K$ is a part of the input  (see Theorem~\ref{tmmisap1}).
Namely, there exists a simple $K$-approximation algorithm, which outputs an optimal solution
to the deterministic~\textsc{IS} problem with  
 the vertex weights computed as follows: $\hat{w}_i:=\frac{1}{K}\sum_{k\in [K]} w_i^{S_k}$, $i\in [n]$.
This can be done in $O(Kn+T(n))$ time, where $T(n)$ is the time for solving the deterministic~\textsc{IS} 
problem (e.g., $T(n)=O(n\log n)$, see~\cite{SPP07}).
\begin{thm}
 \textsc{Min-Max Regret IS} is approximable within~$K$. 
 \label{tkapp}
\end{thm}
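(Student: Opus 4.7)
The plan is to analyze the algorithm described right before the statement: compute averaged weights $\hat{w}_i = \frac{1}{K}\sum_{k\in[K]} w_i^{S_k}$, run the polynomial-time deterministic \textsc{IS} algorithm on the interval graph with these weights, and return the resulting independent set $\hat{X}$. Let $X^*$ be an optimal solution to \textsc{Min-Max Regret IS}. I want to prove the inequality
$$\max_{S\in\Gamma}\bigl(F^*(S)-F(\hat{X},S)\bigr)\leq K\cdot opt_2.$$

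The key observation is that the total weight across all scenarios is linear in the chosen vertex set, so for any independent set $X$ one has
$$\sum_{k\in[K]} F(X,S_k)=\sum_{v_i\in X}\sum_{k\in[K]} w_i^{S_k}=K\sum_{v_i\in X}\hat{w}_i.$$
First I would apply this identity to $\hat{X}$ and $X^*$, and invoke optimality of $\hat{X}$ for the averaged instance to get $\sum_{k}F(\hat{X},S_k)\geq \sum_{k}F(X^*,S_k)$. Second, using the definition of $opt_2$ scenario by scenario, I get $F^*(S_k)\leq F(X^*,S_k)+opt_2$ for each $k$. Summing these inequalities over $k\in[K]$ and combining with the previous one yields
$$\sum_{k\in[K]}\bigl(F^*(S_k)-F(\hat{X},S_k)\bigr)\leq K\cdot opt_2.$$

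The final step is to pass from the sum to the maximum. Since every term $F^*(S_k)-F(\hat{X},S_k)$ is nonnegative (it is the regret of $\hat{X}$ under $S_k$), the maximum is bounded by the sum, so $\max_{k}(F^*(S_k)-F(\hat{X},S_k))\leq K\cdot opt_2$, giving the $K$-approximation. The running time analysis is immediate: computing the averaged weights takes $O(Kn)$ time and the deterministic \textsc{IS} on an interval graph takes $T(n)$, so the total is $O(Kn+T(n))$ as already noted in the text.

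There is no real obstacle in this argument; the only subtlety worth flagging is the swap between maximum and sum in the last step, which relies crucially on the nonnegativity of the regrets $F^*(S_k)-F(\hat{X},S_k)$ (which holds by definition of $F^*(S_k)$). This swap is exactly where the factor $K$ enters, so the proof makes transparent why the averaging heuristic cannot do better than $K$ via this argument.
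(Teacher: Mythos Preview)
Your proof is correct and is essentially the same argument as the paper's, just written in the reverse direction: the paper chains the inequalities starting from $opt_2$ and ending at $\frac{1}{K}$ times the regret of $\hat{X}$, using exactly the same three ingredients (max $\geq$ average for $X^*$, optimality of $\hat{X}$ for the averaged weights, and nonnegativity of regrets to pass from sum back to max). The only thing the paper adds is a tight example showing the ratio~$K$ is attained by this algorithm, which is not needed for the theorem as stated.
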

\begin{proof}
  The proof is adapted from \cite[the proof of Proposition~4]{ABV10} to \textsc{Min-Max Regret IS}.
	Let $\hat{w}_i=\frac{1}{K}\sum_{k\in [K]} w_i^{S_k}$ be the average weight of vertex~$v_i\in V$
	 over all scenarios. Let $X^*$ be an optimal solution to \textsc{Min-Max Regret IS} and let $\hat{X}$ be an optimal solution for the deterministic weights~$\hat{w}_i$, $i\in [n]$. Clearly, $\hat{X}$ can be computed in 
	 polynomial time. 
	 The following inequalities hold:
	 $opt_2=\max_{k\in [K]} (F^*(S_k)-F(X^*,S_k))\geq \frac{1}{K}\sum_{k\in [K]} (F^*(S_k)-F(X^*,S_k))
	 \geq \frac{1}{K}\sum_{k\in [K]} (F^*(S_k)-F(\hat{X},S_k))
	 \geq\frac{1}{K}\max_{k\in [K]} (F^*(S_k)-F(\hat{X},S_k)).$
	 Hence the maximum regret of~$\hat{X}$ is at most $K\cdot opt_2$.

\begin{figure}[htbp]
\begin{center}
\includegraphics[height=6cm]{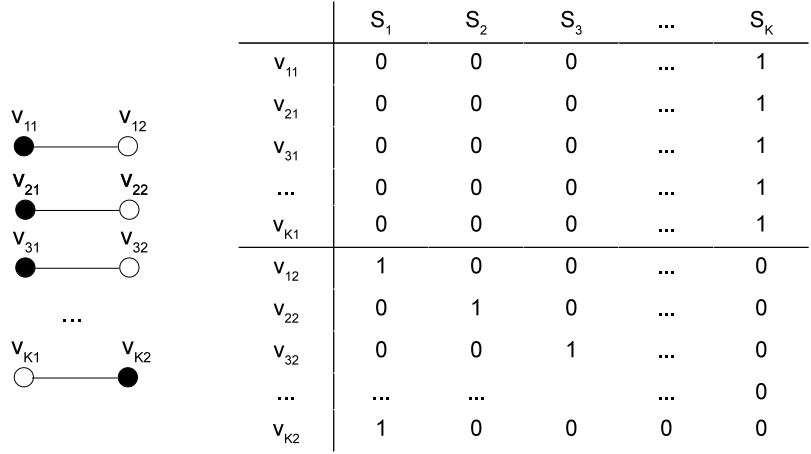}
\caption{A hard example for the $K$-approximation algorithm. The optimal independent set is marked in black.}
\label{fig4}
\end{center}
\end{figure}

To see that the bound is tight consider a sample problem shown in Figure~\ref{fig4}, where an interval graph composed of $2K$ vertices and the corresponding scenario set with $K$ scenarios are shown. The average weight of each vertex equals $1/K$. Hence the algorithm may return the independent set $\hat{X}=\{v_{12},v_{22},v_{32},\dots,v_{K2}\}$ whose maximal regret is equal to $K$. But the maximal regret of the independent set $X^*=\{v_{11},v_{21},v_{31},\dots,v_{K-1,1},v_{K2}\}$ is equal to $1$. 
\end{proof}

It turns out that  \textsc{Max-Min~IS} and \textsc{Min-Max Regret IS}  have  FPTAS's,
when $K$ is constant.
\begin{thm}
        If $K$ is constant, then both 
        \textsc{Max-Min~IS} and \textsc{Min-Max Regret IS}        
        admit  FPTAS's.
        \label{tfptas}
\end{thm}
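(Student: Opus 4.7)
The plan is to convert the pseudopolynomial dynamic programs implicit in~\cite{NL14} into FPTAS's via standard scaling/trimming from the multi-criteria optimization toolbox (Papadimitriou--Yannakakis; Aissi--Bazgan--Vanderpooten). The shared skeleton is the classical interval-graph DP lifted to $K$ dimensions: sort the intervals so that $b_1\le\cdots\le b_n$, let $p(i)$ denote the largest $j<i$ with $b_j<a_i$, and for each prefix $\{I_1,\ldots,I_i\}$ maintain the Pareto set $\mathcal{V}_i$ of $K$-tuples $(F(X,S_1),\ldots,F(X,S_K))$ attained by some independent set $X$ contained in the prefix, via the recurrence
\[
\mathcal{V}_i=\mathrm{Pareto}\bigl(\mathcal{V}_{i-1}\cup\{v+\mathbf{w}_i:v\in\mathcal{V}_{p(i)}\}\bigr),\qquad \mathbf{w}_i=(w_i^{S_1},\ldots,w_i^{S_K}).
\]
From $\mathcal{V}_n$ one reads off both $opt_1=\max_v\min_k v_k$ and, after precomputing each $F^*(S_k)$ by the polynomial deterministic algorithm of~\cite{PB96}, $opt_2=\min_v\max_k(F^*(S_k)-v_k)$.

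For \textsc{Max-Min IS} I would apply Papadimitriou--Yannakakis multiplicative trimming. Fix $\delta=\epsilon/(2n)$ and partition each coordinate axis $[1,W]$, where $W=\max_k F^*(S_k)$, into the geometric cells $[(1+\delta)^{j-1},(1+\delta)^j)$ for $j=1,\ldots,\lceil\log_{1+\delta}W\rceil$, together with the singleton $\{0\}$. After every DP step retain exactly one representative per $K$-dimensional product cell; for constant $K$ the resulting state size is polynomial in the input size and $1/\epsilon$. A standard induction on $i$ shows that $(1+\delta)^n\le 1+\epsilon$ so for every attainable $v^*$ there is a retained $\hat v\in\mathcal{V}_n$ with $\hat v_k\ge v^*_k/(1+\epsilon)$ in every coordinate; taking $v^*$ to be the vector of an optimum $X^*$ yields $\min_k\hat v_k\ge(1-\epsilon)\,opt_1$.

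For \textsc{Min-Max Regret IS} the same multiplicative trimming only gives an additive bound $\operatorname{regret}(\hat v)\le opt_2+\epsilon M$, with $M=\max_k F^*(S_k)$, which is not an FPTAS when $opt_2\ll M$. To repair this, first invoke the $K$-approximation of Theorem~\ref{tkapp} to obtain $UB$ with $opt_2\le UB\le K\cdot opt_2$ (the $K$-approximate solution is already optimal when $UB=0$). Then rerun the trimmed $K$-dimensional DP, this time restricting each partial state $v$ to the \emph{active window} $F^*(S_k)-v_k\le UB+R_k(i)$ for every $k$, where $R_k(i)$, precomputed by $K$ runs of the deterministic IS algorithm on each suffix $\{I_{i+1},\ldots,I_n\}$, upper bounds the scenario-$k$ weight of any independent set extending the current prefix. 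States outside the window cannot be completed into an IS with regret $\le UB\ge opt_2$, so pruning them is safe. With the trimming precision calibrated from $UB$ rather than from the raw axis length, the accumulated per-coordinate error is at most $\epsilon UB/(2K)\le\epsilon\cdot opt_2/2$, giving $\operatorname{regret}(\hat X)\le(1+\epsilon)\,opt_2$.

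The main obstacle will be the third step: one has to show that, even in the hard regime $UB\ll M$, the window restriction combined with a trimming precision as fine as $\sim\epsilon UB/(KM)$ keeps the number of retained states per DP step polynomial. The natural resolution is a two-regime discretisation of each coordinate axis, with an additive grid of $O(nK/\epsilon)$ cells inside an $O(UB)$-wide band around the window boundary (inside which an optimum must ultimately land) and the coarser PY multiplicative grid for the tail of larger deviations, which only needs to propagate those partial vectors that will later re-enter the band. Verifying that the errors across the two regimes telescope to the claimed additive $\epsilon\cdot opt_2/2$ bound per coordinate, and that no relevant partial trajectory is dropped by the window pruning, is the delicate technical step.
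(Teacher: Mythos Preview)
Your treatment of \textsc{Max-Min IS} is correct and is precisely what underlies the general scheme of Aissi--Bazgan--Vanderpooten; the paper simply cites \cite[Theorem~1]{ABV10} together with the pseudopolynomial DP of~\cite{NL14} rather than unfolding the Papadimitriou--Yannakakis trimming by hand.

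For \textsc{Min-Max Regret IS} your route is genuinely different from the paper's and substantially harder than necessary. The paper does not design any custom trimming or windowing scheme. It invokes \cite[Theorem~2]{ABV10}, which turns \emph{any} algorithm whose running time is polynomial in $n$ and in an upper bound $U$ on $opt_2$ into an FPTAS by ordinary weight scaling, provided one has bounds $L\le opt_2\le U$ with $U\le K\cdot L$. The pseudopolynomial algorithm of~\cite{NL14} already has running time polynomial in such a $U$ (not merely in $w_{\max}$), and the $K$-approximation of Theorem~\ref{tkapp} supplies the required $L$ and $U$. Concretely: scale all weights by $t\approx \epsilon L/n$, run the \cite{NL14} algorithm on the scaled instance with the now polynomial bound $U/t=O(Kn/\epsilon)$, and observe that the accumulated rounding error is $O(nt)=O(\epsilon L)\le \epsilon\cdot opt_2$. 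The ``delicate technical step'' you anticipate is thus bypassed entirely.

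Your direct approach may be salvageable, but as stated it is incomplete and I am not convinced it works without further ideas. The one-sided window $F^*(S_k)-v_k\le UB+R_k(i)$ does not by itself bound the range of relevant $v_k$ to width $O(UB)$: that range is $[F^*(S_k)-UB-R_k(i),\,P_k(i)]$ where $P_k(i)$ is the best prefix value, and its width $P_k(i)+R_k(i)-F^*(S_k)+UB$ can be $\Theta(w_{\max})$. You acknowledge that an optimum's trajectory may leave the $O(UB)$ band during the DP and only ``later re-enter'' it, but you give no mechanism guaranteeing that such a trajectory is represented with additive $O(\epsilon\,UB)$ accuracy after passing through the multiplicative-grid region, nor that the number of multiplicative-grid states feeding back into the band stays polynomial. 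If you want a self-contained argument in your DP framework, it is far cleaner to first scale the weights using $L$ from Theorem~\ref{tkapp} (as above) and only then run the trimmed $K$-dimensional DP, so that all relevant deficits are integers bounded by $O(Kn/\epsilon)$ and no two-regime grid is needed.
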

\begin{proof}
The fact that \textsc{Max-Min~IS} admits an FPTAS
follows from \cite[Theorem~1]{ABV10} and the existence of the pseudopolynomial 
algorithm for this problem, provided in~\cite{NL14}, whose running time can be expressed by a  polynomial in
$w_{\max}$ and~$n$, where 
$w_{\max}=\max_{i\in[n], k\in[K]}w^{S_k}_i$.
An FPTAS for \textsc{Min-Max Regret IS}  is a consequence of \cite[Theorem~2]{ABV10}
and 
the existence of the pseudopolynomial 
algorithm for  \textsc{Min-Max Regret IS}, built in~\cite{NL14},
whose running time can be expressed by a  polynomial in~$U$ and~$n$,
where~$U$ is an upper bound on~$ opt_2$ such that $U\leq K\cdot L$ and $L$ is 
 a lower bound on~$ opt_2$.
 Of course,
 such lower and upper bounds can be provided by executing the $K$-approximation algorithm (see Theorem~\ref{tkapp}).
\end{proof}

We now turn to the interval uncertainty representation. 
For a given solution $X\in \Phi$, let~$S_X$ be the scenario under which the weights of $v_i\in X$ are $\underline{w}_i$ 
and the weights of $v_i \notin X$ are $\overline{w}_i$ for $i\in [n]$. 
It has been shown in~\cite{NL14} that the maximal regret of~$X$ is  $Z(X)=F^*(S_X)-F(X,S_X)$. This property
will be useful in proving 
the next two results. The first theorem gives  an answer 
to a question about the complexity of 
 \textsc{Min-Max Regret~IS} under  the interval uncertainty representation (only \textsc{Max-Min~IS} has been known to be polynomially solvable~\cite{NL14}, so far).
\begin{thm}
\textsc{Min-Max Regret IS} under interval uncertainty representation is NP-hard.
\end{thm}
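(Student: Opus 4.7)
The plan is to give a polynomial-time reduction from the weakly NP-complete \textsc{Partition} problem: given positive integers $a_1,\ldots,a_n$ with $\sum_i a_i = 2B$, decide whether some subset sums to exactly~$B$. The fact that the result is stated as ``NP-hard'' (not \emph{strongly} NP-hard), together with the $2$-approximability announced in the abstract, strongly suggests a weakly hard source problem. Throughout the argument I would rely on the identity $Z(X)=F^*(S_X)-F(X,S_X)$ recalled just before the theorem, since it collapses the continuous interval uncertainty into a single, explicitly computable scenario $S_X$ for every candidate independent set~$X$, which is exactly what makes the regret tractable to analyse in closed form on a built instance.

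For the construction I would take one \emph{item gadget} per~$a_i$ --- a small clique of two (or three) overlapping intervals from which any independent set may pick at most one --- together with a few \emph{anchor} intervals that form their own clique and cross every item gadget. The two admissible picks inside the $i$-th gadget are meant to correspond to placing $a_i$ on one or the other side of the \textsc{Partition}, and the weight ranges $[\underline{w},\overline{w}]$ of the gadget intervals would be scaled linearly in $a_i$. The anchors would carry deterministic (or tightly interval-valued) weights calibrated to~$B$, so that swapping some gadget choices for an anchor in the max-weight competitor $Y$ becomes attractive precisely when the items that $X$ has already chosen are ``unbalanced'' with respect to~$B$. After computing $F(X,S_X)$ and $F^*(S_X)$ on such an instance in closed form, the goal is to show that $Z(X)$ reduces to an affine function of $\bigl|\sum_{i\in X^{+}} a_i - B\bigr|$, where $X^{+}\subseteq [n]$ is the set of items whose gadget contributes its ``top'' interval to~$X$. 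The minimum of $Z(X)$ is then attained with zero imbalance if and only if the \textsc{Partition} instance is a yes-instance, giving the reduction.

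The main obstacle is the gadget design itself. Interval graphs are chordal and asteroidal-triple-free, so one cannot freely wire item gadgets to the anchors: every adjacency has to come out of an explicit placement on the real line, which forces each anchor to either cross every gadget or miss all of them, with no local control in between. The numerical calibration is just as delicate: the lower and upper weights on the gadget and anchor intervals have to be tuned so that (i)~$F^*(S_X)$ is realised by one specific independent set~$Y$ for every candidate $X$, so the max inside $F^*(S_X)$ can be written explicitly, and (ii)~the resulting closed form for $Z(X)$ isolates exactly the \textsc{Partition} imbalance $\bigl|\sum_{i\in X^{+}}a_i - B\bigr|$ rather than any combinatorial artefact of the interval layout. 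Once such a gadget is in hand, the equivalence ``\textsc{Partition} is a yes-instance iff $\min_X Z(X)\leq T$'' for the appropriate threshold $T$ is bookkeeping, and NP-hardness follows.
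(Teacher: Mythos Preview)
Your plan is correct and matches the paper's proof almost exactly: the paper reduces from \textsc{Partition}, uses one two-interval clique per item~$a_i$ (your item gadgets) together with a single anchor interval $J=[1,2n+1]$ crossing all of them, and calibrates the weight ranges so that $Z(X)=\tfrac{3}{2}b+\tfrac{1}{2}\bigl|\sum_{i\in X^{+}}a_i-b\bigr|$, precisely the affine-in-imbalance form you predicted. The specific weights are $[3b-\tfrac{3}{2}a_i,\,3b]$ and the degenerate $[3b-a_i,\,3b-a_i]$ on the two gadget intervals and $[0,\,3nb-b]$ on~$J$, which resolves the calibration you flagged as the main obstacle.
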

\begin{proof}
We show a polynomial time reduction from the following \textsc{Partition} problem which is known to be NP-complete~\cite{GJ79}. We are given a collection $\mathcal{C}=(a_1,\dots,a_n)$ of positive integers. We ask if there is a subset $I\subseteq [n]$ such that $\sum_{i\in I} a_i=\sum_{[n]\setminus I} a_i$. Let us define $b=\frac{1}{2}\sum_{i\in [n]} a_i$. We now build the corresponding instance of \textsc{Min-Max Regret IS} as follows. The family of intervals $\mathcal{I}$ contains two intervals $I_{i1}=I_{i2}=[2i,2i+1]$ for each $i\in [n]$ and one interval $J=[1,2n+1]$. The corresponding interval graph for $\mathcal{I}$ is shown in Figure~\ref{fig2}.  The intervals from~$\mathcal{I}$ and  the interval~$J$
refer to the vertices in~$G$.

\begin{figure}[htbp]
\begin{center}
\includegraphics{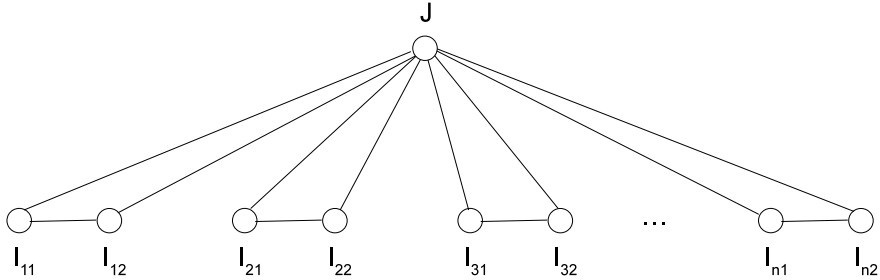}
\caption{The interval graph for the reduction.}
\label{fig2}
\end{center}
\end{figure}

Observe that each maximal independent set in~$G$ contains either one vertex~$J$ or exactly~$n$ vertices, one from each $I_{i1}, I_{i2}$, $i\in [n]$.
The interval  weight of~$I_{i1}$ is equal to $[3b-\frac{3}{2}a_i,3b]$, the interval  weight of $I_{i2}$ is equal to $[3b-a_i,3b-a_i]$, and the interval  weight of vertex $J$ is $[0,3nb-b]$. We now show that the answer to \textsc{Partition} is `yes' if and only if there is an independent set $X$ is $G$ such that $opt_2\leq \frac{3}{2}b$. 

Suppose that the answer to \textsc{Partition} is `yes'. Let $I\subseteq [n]$ be such that $\sum_{i\in I} a_i=\sum_{i\notin I} a_i=b$. Let us form an independent set $X$ in $G$ by choosing the vertices $I_{i1}$ for $i\in I$ and $I_{i2}$ for $i\notin I$.  It holds $F(X,S_X)=\sum_{i\in I} (3b-\frac{3}{2}a_i)+\sum_{i\notin I} (3b-a_i)=3nb-\frac{5}{2}b$
and $F^*(S_X)=\max\{3nb-b, \sum_{i\in I} 3b+\sum_{i\notin I} (3b-a_i)\}=3nb-b$. Hence $Z(X)= 3nb-b-3nb+\frac{5}{2}b=\frac{3}{2}b$.

Assume now that  $opt_2\leq \frac{3}{2}b$, so there is an independent set $X$ in $G$ such that $Z(X)\leq \frac{3}{2}b$. It must be $X\neq \{J\}$ since $Z(\{J\})\geq 3nb$. Hence $X$ is formed by the vertices $I_{i1}$ and $I_{i2}$ for $i \in [n]$. From the construction of graph $G$ it follows that $X$ contains either $I_{i1}$ or $I_{i2}$ for each $i\in [n]$ (but not both). Let $I$ be the subset of $[n]$ such that $I_{i1}\in X$ for each $i\in I$. It holds $F(X,S_X)=\sum_{i\in I} (3b-\frac{3}{2}a_i)+\sum_{i\notin I} (3b-a_i)=3nb-2b-\frac{1}{2}\sum_{i\in I} a_i$ and $F^*(S_X)=\max\{3nb-b,\sum_{i\notin I} 3b+\sum_{i\in I} (3b-a_i)\}=\max\{3nb-b,3nb-\sum_{i\in I} a_i\}$. In consequence
$$Z(X)=\max\{b+\frac{1}{2}\sum_{i\in I} a_i, 2b-\frac{1}{2}\sum_{i\in I} a_i\}$$
 and $Z(X) \leq \frac{3}{2}b$ implies that $\sum_{i\in I} a_i=b$ and, consequently, $I$ forms a partition of $\mathcal{C}$.
\end{proof}

We now provide a simple
 approximation algorithm with a performance ratio of~2.
It  outputs an optimal solution to 
the \textsc{IS} problem with the deterministic vertex weights 
being the midpoints of the
corresponding weights  intervals, i.e.
$\hat{w}_i:=\frac{1}{2}(\underline{w}_i+\overline{w}_i)$ for all $i\in [n]$.
Obviously, its running time is $O(T(n))$, where $T(n)$ is time for solving the \textsc{IS} problem.
\begin{thm} 
\textsc{Max-Min Regret IS} under interval uncertainty representation is approximable within~2. 
\end{thm}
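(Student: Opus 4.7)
The plan is to prove $Z(\hat{X}) \leq 2\cdot opt_2$ for the midpoint-optimal independent set $\hat{X}$, adapting the generic midpoint argument of \cite[Proposition~4]{ABV10} to the regret characterisation $Z(X)=F^{*}(S_X)-F(X,S_X)$ recalled just before the statement. Write $\delta_i=\tfrac{1}{2}(\overline{w}_i-\underline{w}_i)$ and $\hat{F}(Y)=\sum_{v_i\in Y}\hat{w}_i$. The backbone is the elementary identity
\[
F(Y,S_X)-F(X,S_X)=\hat{F}(Y)-\hat{F}(X)+\sum_{v_i\in Y\triangle X}\delta_i,
\]
valid for any independent sets $X,Y$ and obtained by plugging $\underline{w}_i=\hat{w}_i-\delta_i$ and $\overline{w}_i=\hat{w}_i+\delta_i$ into both sides. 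Maximising over $Y$ rewrites $Z(X)$ entirely in ``midpoint coordinates''.

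Let $X^{*}$ be an optimal min-max regret solution, and let $Y^{*}$ be a maximum weight independent set under $S_{\hat{X}}$, so that the identity with $X=\hat{X}$, $Y=Y^{*}$ gives $Z(\hat{X})=\hat{F}(Y^{*})-\hat{F}(\hat{X})+\sum_{v_i\in Y^{*}\triangle \hat{X}}\delta_i$. I would then apply the identity to $X=X^{*}$ with two choices of $Y$. Taking $Y=\hat{X}$ and using $\hat{F}(\hat{X})\geq\hat{F}(X^{*})$ (since $\hat{X}$ is optimal for the midpoint weights) yields
\[
opt_2\geq \sum_{v_i\in \hat{X}\triangle X^{*}}\delta_i;
\]
taking $Y=Y^{*}$ and again using $\hat{F}(X^{*})\leq\hat{F}(\hat{X})$ yields
\[
\hat{F}(Y^{*})-\hat{F}(\hat{X})\leq opt_2-\sum_{v_i\in Y^{*}\triangle X^{*}}\delta_i.
\]

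Substituting the second bound into the expression for $Z(\hat{X})$ and invoking the weighted triangle inequality for symmetric differences, $Y^{*}\triangle\hat{X}\subseteq (Y^{*}\triangle X^{*})\cup(X^{*}\triangle\hat{X})$, the two copies of $\sum_{v_i\in Y^{*}\triangle X^{*}}\delta_i$ cancel, leaving $Z(\hat{X})\leq opt_2+\sum_{v_i\in \hat{X}\triangle X^{*}}\delta_i\leq 2\cdot opt_2$ by the first bound. The one delicate point I anticipate is the bookkeeping around the quantity $\hat{F}(Y^{*})-\hat{F}(\hat{X})$, which is not sign-definite (and is typically strictly negative); the proof only goes through if we keep it intact so that it can absorb the negative $\delta$-contribution coming from $Y^{*}\triangle X^{*}$, rather than crudely discarding it as ``$\leq 0$'' at the wrong stage. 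Beyond this accounting, everything reduces to a short chain of substitutions, and tightness of the constant~$2$ can be witnessed by a small two-vertex example with an asymmetric weight interval on which $\hat{X}$ misses $X^{*}$ and incurs almost twice its regret.
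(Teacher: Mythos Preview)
Your argument is correct and follows the same midpoint idea as the paper, but the bookkeeping is organised differently. You introduce the adversary $Y^{*}$ for $\hat{X}$, work in the $(\hat{w}_i,\delta_i)$ coordinates, and close the loop via the symmetric-difference triangle inequality $Y^{*}\triangle\hat{X}\subseteq(Y^{*}\triangle X^{*})\cup(X^{*}\triangle\hat{X})$. The paper instead compares $\hat{X}$ directly with an arbitrary $X$ and uses the single observation $Z(X)\ge F^{*}(S_{\hat{X}})-F(X,S_{\hat{X}})$ (which holds because the regret is a maximum over all scenarios, in particular over $S_{\hat{X}}$); this absorbs $F^{*}(S_{\hat{X}})$ without ever naming $Y^{*}$ and so dispenses with the triangle-inequality step. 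Your route makes the ``$\delta$-budget'' structure more transparent and would generalise neatly to other worst-case scenario maps, while the paper's is a line or two shorter. For the tightness example, the paper uses a three-vertex clique rather than two vertices, but any small instance where the midpoint rule can pick a vertex with large half-width $\delta_i$ over the optimal one works.
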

\begin{proof}
The analysis will be  similar  to that in~\cite{KZ06}. The difference is that the underlying deterministic problems discussed in~\cite{KZ06} are minimization ones, whereas the deterministic \textsc{IS} is a maximization problem. So, the result obtained in~\cite{KZ06} cannot be directly applied to \textsc{Min-Max Regret IS}.
Let $\hat{w}_i=\frac{1}{2}(\underline{w}_i+\overline{w}_i)$ for all $i\in [n]$ and 
let $\hat{X}$ be an optimal solution
 for the deterministic weights~$\hat{w}_i$, $i\in [n]$. Let us choose any $X\in \Phi$.  It holds $\sum_{v_i\in \hat{X}}(\underline{w}_i+\overline{w}_i)\geq \sum_{v_i\in X}(\underline{w}_i+\overline{w}_i)$, which implies:
$$
\sum_{v_i\in \hat{X}\setminus X} \overline{w}_i-\sum_{v_i\in X\setminus \hat{X}} \underline{w}_i\geq \sum_{v_i\in X\setminus \hat{X}}\overline{w}_i-\sum_{v_i\in \hat{X}\setminus X}\underline{w}_i.
$$
Therefore, $Z(X)$ fulfills the following inequality:
\begin{equation}
\label{e1}	
	Z(X)\geq F(\hat{X},S_X)-F(X,S_X)= \sum_{v_i\in \hat{X}\setminus X}\overline{w}_i-\sum_{v_i\in X\setminus \hat{X}} \underline{w}_i \geq \sum_{v_i\in X\setminus \hat{X}}\overline{w}_i-\sum_{v_i\in \hat{X}\setminus X}\underline{w}_i.
	\end{equation}
Clearly, $F(\hat{X},S_{\hat{X}})=F(X,S_{\hat{X}})+\sum_{v_i\in \hat{X}\setminus X} \underline{w}_i-\sum_{v_i\in X\setminus \hat{X}} \overline{w}_i$. Hence
$Z(\hat{X})=F^*(S_{\hat{X}})-F(\hat{X},S_{\hat{X}})=F^*(S_{\hat{X}})-F(X,S_{\hat{X}})+\sum_{v_i\in X\setminus \hat{X}} \overline{w}_i-\sum_{v_i\in \hat{X}\setminus X} \underline{w}_i$. Since $Z(X)\geq F^*(S_{\hat{X}})-F(X,S_{\hat{X}})$, the maximal regret of~$\hat{X}$ can be bounded as follows:
\begin{equation}
\label{e2}
Z(\hat{X})\leq Z(X)+\sum_{v_i\in X\setminus \hat{X}} \overline{w}_i-\sum_{v_i\in \hat{X}\setminus X} \underline{w}_i.	
\end{equation}
Inequalities~(\ref{e1}) and~(\ref{e2}) imply $Z(\hat{X})\leq 2Z(X)$ for any $X\in \Phi$, and $Z(\hat{X})\leq 2\cdot opt_2$. 

\begin{figure}[ht]
\centering
\includegraphics{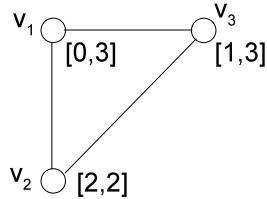}
\caption{A hard example for the 2-approximation algorithm.}\label{fig3}
\end{figure}	

The bound of~2 is tight which is shown in Figure~\ref{fig3}.
The interval graph is a clique composed of 3 vertices. The corresponding interval weights are shown in Figure~\ref{fig3}. The algorithm may return solution  $X=\{v_3\}$.  But $Z(X)=2$ while a trivial verification shows that $opt_2=1$.	
\end{proof}

\section{Conclusions}

In this paper, we have studied the max-min and min-max regret versions of the maximum weighted independent set problem on interval graphs
 with uncertain vertex weights modeled by scenarios.
 We have provided
new complexity and approximation results on the problems, that complete
 the ones previously obtained in the literature.
 For the discrete scenario uncertainty representation,
 we have shown
 that if the number of scenarios $K$ is a part of the input, 
then the max-min version is not at all approximable, the min-max regret version is approximable within $K$ and not approximable within $O(\log^{1-\epsilon}K)$ for any $\epsilon>0$ unless  problems in NP have quasi polynomial algorithms. Furthermore, it has  turned out that both problems admit FTPAS's, when $K$ is constant.
For the interval uncertainty representation, we have proved  that the min-max regret version is NP-hard, 
providing in this way an answer to a question about the complexity of the problem. We have also shown that it is approximable within~2. 
There are still some open questions regarding the min-max regret version of the problem. It would be interesting to provide an approximation algorithm with better than $K$ approximation ratio for the discrete uncertainty representation (when $K$ is part of input) and better than 2 approximation ratio for the interval uncertainty representation. We also do not know whether the latter problem is strongly NP-hard, so it may be solved in pseudopolynomial time and admit and FPTAS.

\subsubsection*{Acknowledgements}
This work was 
partially supported by
 the National Center for Science (Narodowe Centrum Nauki), grant  2013/09/B/ST6/01525.


\end{document}